\documentclass[12pt,a4paper]{amsart}
\usepackage[a4paper, footskip=0.5in,  headheight = 0.5in, top=1.25in, bottom=1.25in,  right=1in,  left=1in]{geometry}
\usepackage{macros}

\title[Clique-width and induced topological minors]{Clique-width and induced topological minors}

\author{Paweł Rafał Bieliński$^{\dagger}$}
\thanks{$^{\dagger}$Warsaw University of Technology, Poland. Supported by the National Science Centre grant 2024/54/E/ST6/00094.}
\author{Jadwiga Czyżewska$^{\ddagger}$}\thanks{$^{\ddagger}$University of Warsaw, Poland (\texttt{j.czyzewska@mimuw.edu.pl}).
Supported by Polish National Science Centre SONATA BIS-12 grant number 2022/46/E/ST6/00143.}
\author{Martin Milanič$^{\mathsection}$}\thanks{$^{\mathsection}$FAMNIT and IAM, University of Primorska, Slovenia (\texttt{martin.milanic@upr.si}). 
Supported in part by the Slovenian Research and Innovation Agency (I0-0035, research program P1-0285 and research projects J1-60012, J1-70035, J1-70046, and N1-0370) and by the research program CogniCom (0013103) at the University of Primorska.}
\author{Amir Nikabadi$^{\ast}$}\thanks{$^{\ast}$IT University of Copenhagen, Denmark (\texttt{amir@itu.dk}).
Supported by the Independent Research Fund Denmark (DFF), grant agreement number 2098-00012B}
\author{Paweł Rzążewski$^{\parallel}$}\thanks{$^{\parallel}$Warsaw University of Technology, Poland (\texttt{pawel.rzazewski@pw.edu.pl}). Supported by the National Science Centre grant 2024/54/E/ST6/00094.}

\begin{document}

\maketitle

\begin{abstract}
A $P_4$ is a chordless path on four vertices. 
A \emph{diamond} is a~graph obtained from a~clique of size four by removing one edge of the clique.
A \emph{paw} is a~graph obtained from a~clique of size four by removing
two adjacent edges of the clique.
We prove that for a graph $H$, the class of graphs with no induced subdivision of $H$ has
bounded clique-width if and only if $H$ is an induced subgraph of $P_4$, the paw, or the diamond. 
This answers a~question of Dabrowski, Johnson, and Paulusma.
\end{abstract}

\section{Introduction}
Graph classes defined by excluding a~fixed graph with respect to some containment
relation form a~central topic in structural graph theory. 
For the minor relation, the Grid-Minor Theorem of Robertson and Seymour~\cite{robertson1986graph} implies a~complete characterization of the graphs $H$ for which the class of $H$-minor-free graphs
has bounded treewidth: this happens precisely when $H$ is planar. 
Because of the pivotal role of treewidth in algorithmic graph theory, there have been numerous attempts, both successful and unsuccessful, to find similar theorems for other containment relations combined with an appropriate width parameter. 
The containment relation vertex-minors combined with rank-width~\cite{geelen2023grid},  
immersions combined with tree-cut width~\cite{wollan2015structure}, and induced minors combined with tree-independence number~\cite{dallard2024treewidth2} are such examples. 

On the other hand, generalizing the Grid-Minor Theorem to induced subgraphs is an active area of research. 
This line of work is developed in the series ``Induced subgraphs and tree decompositions'', whose stated goal is to understand which induced subgraphs are unavoidable in graphs of large treewidth~\cite{abrishami2022induced,chudnovsky2025complete,chudnovsky2025thetas}.

In this paper, we focus on the setting in which the containment relation is the \emph{induced topological minor} relation and the width parameter is \emph{clique-width}.
Clique-width, denoted by $\cw$, has been studied extensively both in algorithmic and structural graph theory. 
It generalizes the aforementioned classical notion of \emph{treewidth}, denoted by $\tw$, in the following sense: For every graph~$G$, it holds that ${\cw(G)= \mathcal{O}(2^{\tw(G)})}$~\cite{corneil2005relationship}, whereas no converse bound holds, since there are graphs of arbitrarily large treewidth and bounded clique-width (a graph class $\mathcal{G}$ has bounded clique-width if there exists $c \in \mathbb{N}$ such that $\cw(G) \leq c$ for all $G \in \mathcal{G}$).

\medskip

Dabrowski, Johnson, and Paulusma~\cite{Dabrowski_Johnson_Paulusma_2019} posed the following
problem: 

\begin{problem}[{Dabrowski, Johnson, and Paulusma~\cite[Problem~4.34]{Dabrowski_Johnson_Paulusma_2019}}]\label{openprob}
For which graphs $H$ does the class of $H$-induced-topological-minor-free graphs have bounded clique-width?
\end{problem}

We give a~complete characterization of classes of $H$-induced-topological-minor-free graphs of bounded clique-width, and thereby provide a~solution to Problem~\ref{openprob}.
In particular, we prove the following:

\begin{restatable}{theorem}{thmmain}\label{thm:main}
Let $H$ be a~graph.
Then, the class of $H$-induced-topological-minor-free graphs has bounded clique-width if and only if $H$ is an induced subgraph of the $P_4$, the paw, or the diamond.
\end{restatable}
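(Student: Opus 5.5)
We prove the two directions separately, handling sufficiency through the maximal graphs. If $H$ is an induced subgraph of $H'$, then every $H$-induced-topological-minor-free graph is also $H'$-induced-topological-minor-free. Indeed, an induced subdivision of $H'$ contains an induced subdivision of $H$: take the branch vertices corresponding to $V(H)$ and the subdivided paths corresponding to edges of $H$; since $H$ is induced in $H'$, no further adjacencies appear. So for the ``if'' direction it suffices to treat $H\in\{P_4,\text{paw},\text{diamond}\}$.

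For $P_4$, every subdivision of $P_4$ is a longer induced path, so the excluded class is exactly the $P_4$-free graphs (cographs), which have clique-width at most $2$.

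For the paw, a subdivision is a cycle $C$ (of length at least $3$) with a pendant path attached at one vertex of $C$, and the embedding must be induced. We aim to show that a connected graph with no induced subdivision of the paw is either a cycle, or contains no induced cycle other than triangles and is paw-free in a strong sense. More precisely, consider an induced cycle $C$ of length at least $4$ in a connected $G$ with some vertex $v\notin C$ adjacent to $C$. If $v$ has exactly one neighbour on $C$, we immediately get the forbidden structure. Otherwise, a short case analysis (two nonadjacent neighbours, or a consecutive pair) produces a smaller induced cycle through $v$ with a pendant vertex. Hence every connected component with a long induced cycle is a cycle, and the remaining components are chordal. A chordal graph with no induced subdivided paw has no induced paw with long tail; combined with Olariu's characterization of paw-free graphs (each component is triangle-free or complete multipartite), we reduce to triangle-free chordal graphs, i.e.\ forests, plus complete multipartite graphs. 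All of these have bounded clique-width.

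For the diamond, a subdivision is a theta graph in which one of the three paths may have length $1$. Excluding all induced such structures means the graph is ``chordless-cycle-structured'': every cycle with a chord fails to be induced. We plan to use known structure. Graphs with no induced theta and no diamond-like subdivision are close to chordless graphs; we would show that every $2$-connected block is a clique or a cycle-like object with bounded-width decomposition. We would then bound the clique-width via a tree-decomposition over $2$-connected blocks, using that clique-width is bounded under taking blocks, since a cut-vertex gluing adds only $O(1)$ labels. This diamond case is the main obstacle: the structure theorem for graphs excluding all induced subdivisions of the diamond must be established carefully (perhaps via the Truemper-configuration machinery: no thetas, and restricted prisms and pyramids), and clique-width boundedness must be derived from it.

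For the ``only if'' direction, suppose $H$ is not an induced subgraph of $P_4$, paw, or diamond. We exhibit classes of unbounded clique-width avoiding induced subdivisions of $H$, using known constructions with the following case split.
\begin{itemize}
\item If $H$ contains a vertex of degree at least $3$ with three pairwise nonadjacent neighbours (a claw), use walls subdivided appropriately. Alternatively, use line graphs of walls or subdivided walls, which are claw-free and of unbounded clique-width; we must check that they contain no induced subdivision of $H$.
\item If $H$ contains two disjoint cycles, or has two components with a cycle, use a class with unbounded clique-width in which every induced cycle meets a fixed structure, e.g.\ $K_{n,n}$ minus a matching variants, or split permutation graphs.
\end{itemize}
The key step is a short list of minimal non-admissible graphs, each with a witness class: the claw ($K_{1,3}$), $K_4$, $C_4$-like graphs, $2K_2$-type graphs, etc. Candidates are split graphs (unbounded clique-width, no induced $C_4$ or $2K_2$, chordal); bichain or bipartite permutation graphs (no triangles); and co-bipartite chain graphs. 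For each candidate $H$, we pick one of these classes and verify that it has no induced subdivision of $H$.
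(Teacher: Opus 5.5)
There are genuine gaps in both directions. For the ``only if'' direction you give a plan rather than a proof, and the plan is flawed. Walls cannot witness any $H$ containing an induced claw: walls are bipartite, so every degree-$3$ vertex together with its three neighbours induces a claw. (Line graphs of walls would work there.) More seriously, your list of minimal bad graphs is wrong. The minimal graphs, under induced subgraphs, that are not induced subgraphs of $P_4$, the paw, or the diamond are exactly $3K_1$, $2K_2$, $K_3+K_1$, $C_4$, $K_4$, $C_5$, and the gem; the claw is not minimal, since it contains $3K_1$. You never treat $K_3+K_1$, whose induced subdivisions are an induced cycle plus a vertex with no neighbour on it. None of your candidate classes avoids such configurations (split graphs contain $K_3+K_1$, bipartite permutation graphs contain $C_4+K_1$), and finding a class of unbounded clique-width here is not routine.

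The missing idea is the paper's reduction. Contracting subdivided edges shows that every $H$-induced-minor-free graph is $H$-induced-topological-minor-free. Hence Lemma~\ref{lem:gem-fathouse} immediately forces $H$ to be an induced subgraph of the gem or the fat-house. Walls, being subcubic, then exclude a vertex of degree $4$. The only remaining non-admissible graph is $K_4$, which is handled by line graphs of walls: they are $\{K_4,\text{diamond},\text{claw}\}$-free by Lemma~\ref{lem:line-subcubic}, while every proper subdivision of $K_4$ contains an induced diamond or claw.

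In the ``if'' direction, the paw case rests on a false claim: a connected graph with an induced cycle of length at least $4$ need not be a cycle. For example, $K_{3,3}$ and $C_4$ plus a universal vertex are complete multipartite, hence paw-induced-topological-minor-free. Yet they contain induced $C_4$'s and are neither cycles nor chordal. In $K_{3,3}$ every vertex outside an induced $4$-cycle has exactly two neighbours on it, so no case analysis can produce the pendant vertex you want. The correct structure (every connected such graph is a tree, a cycle, or complete multipartite) is Lemma~\ref{lem:paw}, after which Lemma~\ref{lem:boliac-lozin} finishes.

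In the diamond case you leave the structure theorem open. What is needed is Lemma~\ref{lem:diam}: the diamond-induced-topological-minor-free graphs are exactly the block-cactus graphs. Then Lemma~\ref{lem:boliac-lozin}, together with the bounded clique-width of cycles and cliques, settles the case with no Truemper-configuration machinery. Your monotonicity reduction and the $P_4$ case are fine.
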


The proof of Theorem~\ref{thm:main} combines known characterizations for $H$-free graphs and $H$-induced-topological-minor-free graphs for particular graphs $H$, as well as the analogue of Theorem~\ref{thm:main} for $H$-induced-minor-free graphs.

Throughout the note, graphs have finite vertex sets, no loops, and no parallel edges. 
We use $\mathbb{N}$ to denote the set of positive integers. We let $[n]\coloneqq\{1,\dots, n\}$ for every $n\in \mathbb{N}$. 
Let $G = (V,E)$ be a~graph. A \textit{clique} in $G$ is a~set of pairwise adjacent vertices. 
For $X \subseteq V(G)$, we denote the subgraph of $G$ induced by $X$ as $G[X]$, that is, $G[X] = (X, \{uv \colon u, v \in X \mbox{ and } uv \in E \})$.
We denote by $N(X)$ vertices of $G\setminus X$ with a~neighbor in $X$, and $N[X] = N(X)\cup X$.
We denote by $\Delta(G)$ the maximum degree of~$G$, that is, the maximum value of $|N(\{v\})|$ over all $v\in V$.
A graph is \emph{subcubic} if its maximum degree is at most $3$ and  \emph{complete multipartite} if it admits a partition of its vertex set such that two distinct vertices are adjacent if and only if they belong to distinct parts.

Given two graphs $G$ and $H$, the graph $H$ is a \emph{subdivision} of $G$ if $H$ can be obtained from $G$ by a sequence of edge subdivisions, a \emph{topological minor} of $G$ if some subdivision of $H$ is a subgraph of $G$, an \emph{induced minor} of $G$ if $H$ can be obtained from $G$ by a~sequence of vertex deletions or edge contractions, 
and an \emph{induced topological minor} of $G$ if some subdivision of $H$ is an induced subgraph of $G$
(in which case we also say that $G$ \emph{contains an induced subdivision of $H$}).
If $H$ can be obtained from $G$ by a~sequence of vertex deletions, edge deletions, and edge contractions, then $H$ is said to be a~\emph{minor} of $G$.
If $G$ does not contain an induced subgraph isomorphic to $H$, 
then we say that $G$ is \emph{$H$-free}.
Analogously, we say that $G$ is \emph{$H$-topological-minor-free}, \emph{$H$-induced-topological-minor-free}, \emph{$H$-minor-free}, or \emph{$H$-induced-minor-free}, respectively.
Similarly, we say that $G$ is \emph{$\mathcal H$-free} for a set $\mathcal{H}$ of graphs if $G$ is $H$-free for all $H\in \mathcal H$.

  \begin{figure}[ht]
        \centering
        \includegraphics[width=\textwidth]{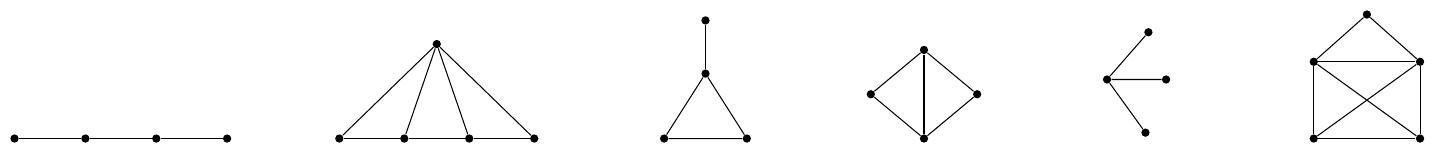}
        \caption{From left to right: the $P_4$, the gem, the paw, the diamond, the claw, and the fat-house.}
        \label{fig:graphs}
    \end{figure}

We let $P_n$ and $K_n$ denote the chordless path and the complete graph on $n$ vertices. 
A \emph{diamond} is a~graph obtained from a~$K_4$ by removing one edge of the $K_4$. 
A \emph{paw} is a~graph obtained from a~$K_4$ by removing two adjacent edges of the $K_4$.
A \emph{claw} is a~graph obtained from a~paw by removing the edge between the two vertices of degree $2$.
A \emph{fat-house} is a~graph obtained from a~$K_4$ by adding a~vertex adjacent to exactly two vertices of the~$K_4$.
A \emph{gem} is graph obtained from a~$P_4$ by adding a~vertex 
adjacent to all vertices of the $P_4$.
(See Figure~\ref{fig:graphs}).
The \textit{$(n \times m)$-grid}, denoted $G_{n\times m}$, is the graph with vertex set $[n] \times [m]$ and edge set
\[    
\{ \{ (i, j), (i, j+1) \} \colon i \in [n], j \in [m-1] \} \cup \{ \{ (i, j), (i+1, j) \} \colon i \in [n-1], j \in [m] \}.
\]
The \textit{($k\times k$)-wall} for $k\geq 3$, denoted $W_{k\times k}$, is the graph $G$ obtained from the $(k \times 2k)$-grid $G_{k \times 2k}$ by deleting every
odd edge in every odd column and every even edge in every even column, and then deleting all degree-one vertices (see Figure~\ref{fig:grid-wall}). 

 \begin{figure}[ht]
    \centering
\includegraphics[width=0.55\linewidth]{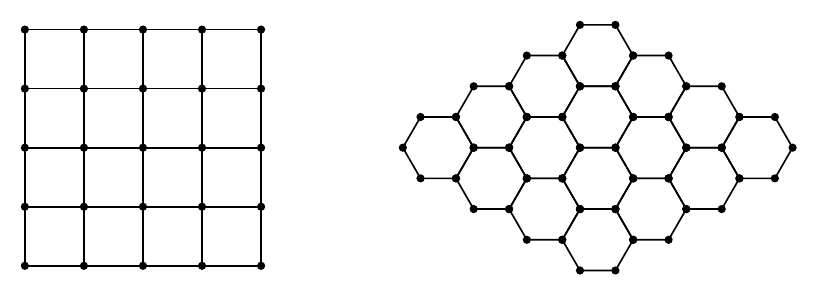}
    \caption{The 5-by-5 square grid (left) and the 5-by-5 wall $W_{5\times 5}$ (right).}
    \label{fig:grid-wall}
\end{figure}

We will make use of the following fact about unboundedness of clique-width in the class of walls and their line graphs (we do not need the definition of clique-width in this paper).

\begin{observation}\label{obs:cw}
    The class of walls has unbounded clique-width.
    The class of line graphs of walls has unbounded clique-width.
\end{observation}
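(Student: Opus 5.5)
Both claims follow from known results, so the plan is to reduce each to a standard fact rather than compute clique-width directly.

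\textbf{Walls.} I would use the classical fact that the class of all grids $G_{n\times n}$ has unbounded clique-width; indeed $\cw(G_{n\times n}) = \Theta(n)$ by Golumbic and Rotics. To transfer this to walls, I would use rank-width, which is equivalent to clique-width up to a function: $\operatorname{rw}(G)\le \cw(G)\le 2^{\operatorname{rw}(G)+1}-1$ by Oum and Seymour. For graphs of bounded degree (or more generally without large complete bipartite subgraphs), bounded rank-width implies bounded treewidth. Walls are subcubic and have treewidth $\Theta(k)$, since $W_{k\times k}$ contains a subdivision of a large grid and hence a large grid minor. So if walls had bounded clique-width, they would have bounded rank-width, hence bounded treewidth, a contradiction. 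Alternatively, one can cite Gurski and Wanke directly: for graphs with no $K_{t,t}$ subgraph, clique-width is bounded in terms of treewidth and vice versa.

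\textbf{Line graphs of walls.} Here I would use the Gurski–Wanke theorem that a class of graphs has bounded treewidth if and only if the class of their line graphs has bounded clique-width. Since walls have unbounded treewidth, their line graphs have unbounded clique-width. As a lighter alternative, the line graph of a subcubic graph has maximum degree at most $4$, hence contains no $K_{3,3}$ subgraph. Its treewidth is at least roughly half that of the original graph minus one, because a tree decomposition of $L(G)$ yields one of $G$ of comparable width. So again large treewidth plus bounded degree gives large clique-width.

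\textbf{Main obstacle.} The only nontrivial ingredient is the equivalence, for sparse classes, between bounded clique-width and bounded treewidth, together with the lower bound on the treewidth of walls. Both are standard and will be cited rather than reproved. The proof therefore consists of assembling citations: grids and walls have large treewidth; bounded degree lets treewidth be controlled by clique-width; and Gurski–Wanke handles line graphs.
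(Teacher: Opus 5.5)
Your proposal is correct and follows essentially the paper's argument. For walls, you use that unbounded treewidth together with bounded degree forces unbounded clique-width: you go through rank-width or Gurski--Wanke's bound for $K_{t,t}$-subgraph-free graphs, where the paper cites a bound $\tw(G)\le f(\Delta(G),\cw(G))$. For line graphs, you invoke the same Gurski--Wanke line-graph theorem as the paper.

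A few side remarks are misstated or unused, though none affects the argument:
\begin{itemize}
\item A subcubic wall contains no subdivision of a large grid. It contains a large grid only as a minor, obtained by contracting in each row the edges joining columns $2m-1$ and $2m$.
\item Maximum degree at most $4$ by itself excludes $K_{5,5}$ as a subgraph, which is all you need, but it does not exclude $K_{3,3}$.
\item The clique-width bound for grids that you open with is never actually used.
\end{itemize}
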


We briefly justify Observation~\ref{obs:cw}. The first assertion follows from combining two facts: (i) treewidth is unbounded in the class of  walls~(see Reed~\cite{reed1997tree}), which are planar graphs of maximum degree at most 3, and (ii) there exists a function $f$ such that every graph $G$ satisfies $\tw(G)\leq f(\Delta(G),\cw(G))$ (see, e.g.,~\cite{kaminski2009recent}).
The second assertion follows from a~result of Gurski and Wanke~\cite{gurski2007line} stating that a~class of graphs has bounded treewidth if and only if the class of their line graphs has bounded clique-width. 
Since walls have unbounded treewidth, their line graphs have unbounded clique-width.


\section{The Proof}
We need a~couple of results from the literature, beginning with the characterizations of classes of $H$-free graphs of bounded clique-width:

\begin{lemma}[{Dabrowski and Paulusma~\cite[Theorem~6]{dabrowski2016clique}}]\label{lem:p4}
Let $H$ be a~graph.
Then, the class of $H$-free graphs has bounded clique-width if and only if $H$ is an induced subgraph of $P_4$.
\end{lemma}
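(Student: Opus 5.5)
The statement is Lemma~\ref{lem:p4}, which is cited from Dabrowski and Paulusma. I would nevertheless prove it in two directions.

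\textbf{Sufficiency.} Suppose $H$ is an induced subgraph of $P_4$. Then every $H$-free graph is $P_4$-free, that is, a cograph. Cographs are built from single vertices by disjoint unions and joins, and this gives clique-width at most $2$ directly. For a disjoint union, take the two expressions, each using only label $1$ at the end, and put them side by side. For a join, relabel one side to $2$, add all edges $1$--$2$, and relabel $2$ back to $1$.

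\textbf{Necessity.} Assume $H$ is not an induced subgraph of $P_4$. I would exhibit a single class of $H$-free graphs with unbounded clique-width, using a case split on whether $H$ contains a cycle.
\begin{itemize}
\item If $H$ contains a cycle, consider the walls $W_{k\times k}$. Their girth is $6$, so they contain no induced cycle of length less than $6$. If $H$ has a short cycle, walls are $H$-free. In general, take walls with every edge subdivided $\ell$ times, where $\ell$ exceeds $|V(H)|$. These graphs have girth larger than $|V(H)|$, so every induced subgraph on $|V(H)|$ vertices is a forest, and they are $H$-free. Their treewidth is unbounded and their maximum degree is at most $3$. By the bound $\tw\le f(\Delta,\cw)$ quoted after Observation~\ref{obs:cw}, their clique-width is unbounded.
\item If $H$ is a forest but not an induced subgraph of $P_4$, then $H$ contains a claw, or $2K_2$, or $3K_1$, or $K_2+K_1$ together with more vertices. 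In each subcase I would pass to the complement of $H$. Clique-width is bounded by a function of the clique-width of the complement, so it suffices to find $\overline{H}$-free graphs of unbounded clique-width. If $H$ has $3K_1$ or $2K_2$, then $\overline{H}$ contains a triangle or a $C_4$. The subdivided walls above are then $\overline{H}$-free, and so their complements are $H$-free with unbounded clique-width.
\item If $H$ contains a claw, use line graphs of walls. These are claw-free, and their clique-width is unbounded by the second part of Observation~\ref{obs:cw}.
\end{itemize}
The remaining small forests are $K_2+K_1$ plus isolated vertices, $P_3+K_1$, and so on. Each of these contains $3K_1$, so it is covered by the complement argument. A forest avoiding $3K_1$, $2K_2$ and the claw has at most $4$ vertices and is an induced subgraph of $P_4$.

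The main obstacle is making this finite case analysis of forests exhaustive and correct. The tool making it work is the standard fact that complementation changes clique-width by at most a factor of $2$.
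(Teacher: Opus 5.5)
Your proof is correct. The paper does not prove this lemma; it imports it as a black box from Dabrowski and Paulusma. So your argument is a self-contained replacement rather than a reconstruction of anything in the paper.

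It uses only tools the paper already has, namely walls, the bound $\tw\le f(\Delta,\cw)$, and line graphs of walls, plus the standard fact $\cw(\overline{G})\le 2\cw(G)$. It is essentially the standard argument: sufficiently subdivided walls are $H$-free whenever $H$ has a cycle, and their complements are $H$-free whenever $\overline{H}$ has a cycle.

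Two points would tighten it.

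First, the claw bullet is redundant. The claw contains $3K_1$, so it is already covered by the complement argument, and you do not need line graphs of walls.

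Second, you flag exhaustiveness of the forest case as the main obstacle, but you only assert the key claim. It needs its one-line justification. A forest is bipartite, so it has an independent set of size at least $\lceil n/2\rceil$, and being $3K_1$-free forces $n\le 4$. Among forests on at most four vertices with independence number at most $2$, only $2K_2$ fails to be an induced subgraph of $P_4$.

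A cleaner way to organize the whole necessity direction is this. If both $H$ and $\overline{H}$ are forests on $n$ vertices, then $\binom{n}{2}=|E(H)|+|E(\overline{H})|\le 2(n-1)$, so $n\le 4$. A direct check then shows that $H$ is an induced subgraph of $P_4$. Hence for every other $H$, either the subdivided walls or their complements form an $H$-free class of unbounded clique-width, and one family handles all cases without any case split on forests.
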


We also need the analogous result for $H$-induced-minor-free graphs:

\begin{lemma}[{Belmonte, Otachi, and Schweitzer~\cite[Theorem~1.2]{belmonte2018induced}}]\label{lem:gem-fathouse}
Let $H$ be a graph. The class of $H$-induced-minor-free graphs has bounded
clique-width if and only if $H$ is an induced subgraph of the fat-house or of the gem.
\end{lemma}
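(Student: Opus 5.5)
The plan is to prove the two directions separately. The ``only if'' direction reduces to exhibiting a few graph classes of unbounded clique-width that are closed under induced minors. The ``if'' direction needs structural decompositions of gem- and fat-house-induced-minor-free graphs.

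\emph{The ``only if'' direction.} Let $\mathcal{C}$ be a class of graphs with unbounded clique-width that is closed under vertex deletion and edge contraction. If $H\notin\mathcal{C}$, then every graph in $\mathcal{C}$ is $H$-induced-minor-free, so the class of $H$-induced-minor-free graphs has unbounded clique-width. Hence every $H$ giving bounded clique-width must lie in every such $\mathcal{C}$. I would use the following classes.
\begin{itemize}
\item \textbf{Co-bipartite graphs.} Contracting an edge inside one of the two cliques, or between them, gives a vertex that can be placed in either clique, so the class is closed. Its clique-width is unbounded, since complements of bipartite graphs containing walls have large clique-width.
\item \textbf{Split graphs.} Contracting a clique--independent edge yields a vertex that can be put into the clique, so the class is closed. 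Its clique-width is unbounded by the known results for split graphs.
\item \textbf{Planar graphs.} This class contains all walls and has unbounded clique-width by Observation~\ref{obs:cw}, so $H$ must be planar.
\end{itemize}
If needed, I would add a further closed class, such as a suitable permutation or interval-type class known to have unbounded clique-width. The combinatorial step is then a finite case analysis. A graph that is simultaneously split and co-bipartite has a clique $K$ and an independent set $I$ with $|I|\le 2$. Planarity forces $|K|\le 4$. The remaining constraints cut the possible graphs down to induced subgraphs of the gem ($K_3$ plus two independent vertices arranged as a $P_4$ plus a universal vertex) or of the fat-house ($K_4$ plus one vertex of degree $2$). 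I would verify the maximality of these two graphs by checking each one-vertex or one-edge extension against one of the classes above.

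\emph{The ``if'' direction.} It suffices to treat $H$ equal to the gem and to the fat-house, since excluding an induced subgraph of $H$ as an induced minor also excludes $H$.
\begin{itemize}
\item \textbf{Gem.} If $G$ has no gem induced minor, then for every vertex $v$ and every connected component $C$ of $G-N[v]$, the neighborhood $N(C)\cap N(v)$ must be $P_4$-free. Otherwise contracting $C\cup\{v\}$ yields a gem. I would combine this with the decomposition of $G$ along clique cutsets and modules to show that every atom has bounded clique-width, and then use that clique-width is preserved under the relevant substitution and gluing operations.
\item \textbf{Fat-house.} A fat-house induced minor is a $K_4$ together with a connected piece attached to exactly two of its vertices. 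So in a fat-house-induced-minor-free graph, large cliques must interact with the rest of the graph in a very restricted way. I would use this to decompose the graph into pieces of bounded clique-width, such as $K_4$-induced-minor-free parts, which have bounded treewidth-like structure, and cliques with controlled attachments.
\end{itemize}

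The main obstacle is this ``if'' direction, in particular obtaining a decomposition theorem strong enough to bound clique-width in the fat-house case. Clique cutsets alone do not preserve bounded clique-width in general, so one has to exploit that the separations involved are simple, for example modules or separators that are cliques of size at most $3$.
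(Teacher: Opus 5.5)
The paper does not prove this lemma; it quotes it from Belmonte, Otachi and Schweitzer. Your proposal therefore has to stand on its own, and it has a genuine gap already in the ``only if'' direction. Consider the graphs that are induced subgraphs of neither the gem nor the fat-house and are minimal with this property under taking induced subgraphs. These are $3K_1$, $2K_2$, $C_4$, $C_5$, $K_3+K_1$, $K_5-e$ and $K_5$. Your three classes handle $3K_1$ (co-bipartite), $2K_2,C_4,C_5$ (split) and $K_5$ (planar). However, $K_3+K_1$ and $K_5-e$ are split, co-bipartite and planar, so your claim that ``the remaining constraints'' leave only induced subgraphs of the gem or the fat-house is false. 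Your fallback cannot help either, since both graphs are interval graphs and cographs (hence permutation graphs). Checking one-vertex or one-edge extensions of the gem and the fat-house is also the wrong test, since $K_3+K_1$ has only four vertices.

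Two further constructions are needed.
\begin{itemize}
\item For $K_3+K_1$: $G$ is $(K_3+K_1)$-induced-minor-free if and only if $G-N[v]$ is a forest for every vertex $v$. Complements of line graphs of $1$-subdivided walls satisfy this, because each such set induces a star. They have unbounded clique-width by Gurski--Wanke, together with the fact that complementation at most doubles clique-width.
\item For $K_5-e$: take the graphs admitting a split partition $(K,I)$ in which every vertex of $I$ has at most two neighbours. This class is closed under vertex deletion and edge contraction (put the contracted vertex into $K$). For every wall, it contains the graph obtained by turning the wall's vertex set into a clique and adding one vertex adjacent to the two ends of each edge. The rank-width of that graph is within $1$ of that of the $1$-subdivided wall, since the adjacency matrices differ by an all-ones block on $K$; hence it is unbounded. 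Finally, the class excludes $K_5-e$, because every split partition of $K_5-e$ has a vertex of degree $3$ on the independent side.
\end{itemize}

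The ``if'' direction is a list of intentions rather than a proof. Its only concrete step is mis-justified. Since $C$ is a component of $G-N[v]$, the set $C\cup\{v\}$ is disconnected and cannot be contracted. Moreover, the conclusion that $N(C)\cap N(v)$ is $P_4$-free already follows from $G$ being gem-free. That hypothesis alone does not bound clique-width (by Lemma~\ref{lem:p4}, as the gem is not an induced subgraph of $P_4$), so the step gives no real leverage. What is actually needed is a structure theorem for gem-induced-minor-free graphs and one for fat-house-induced-minor-free graphs, whose decomposition operations preserve bounded clique-width. The proposal leaves exactly this unspecified, as you yourself acknowledge.
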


Next, we need the characterization of paw-induced-topological-minor-free graphs:

\begin{lemma}[{Chudnovsky, Penev, Scott, and Trotignon;~\cite[2.1]{chudnovsky2012excluding}}]\label{lem:paw}
A connected graph $G$ is paw-induced-topological-minor-free if and only if $G$ is either a~tree, a~cycle, or a~complete multipartite graph.
\end{lemma}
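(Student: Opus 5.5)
The plan is to prove both directions directly. The key tool is the classical theorem of Olariu: every connected paw-free graph is triangle-free or complete multipartite. Note that a paw-induced-topological-minor-free graph is in particular paw-free.

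\emph{Sufficiency.} A tree contains no cycle, while every subdivision of the paw does, so a tree has no induced subdivision of the paw. A cycle has maximum degree $2$, while every subdivision of the paw has a vertex of degree $3$, so a cycle contains none either. For complete multipartite graphs, I would use that they are exactly the graphs with no induced $K_1\cup K_2$, and that this property passes to induced subgraphs. So it suffices to show that every subdivision $S$ of the paw contains an induced $K_1\cup K_2$. Let $\ell$ be the leaf of $S$ and $u$ its unique neighbour. The cycle of $S$ has at least three vertices, so it has an edge avoiding $u$. If $u$ lies on the cycle, $\ell$ is adjacent to no endpoint of that edge. Otherwise $\ell$ has no neighbour on the cycle at all. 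In both cases $\ell$ together with that edge induces $K_1\cup K_2$.

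\emph{Necessity.} Let $G$ be connected and paw-induced-topological-minor-free. Then $G$ is paw-free, so by Olariu's theorem $G$ is complete multipartite or triangle-free. Assume $G$ is triangle-free and not a tree. Let $C$ be a shortest cycle; it is induced and has length at least $4$. If $G=C$ we are done. Otherwise, by connectivity some vertex $v\notin C$ has a neighbour on $C$.
\begin{itemize}
\item If $v$ has exactly one neighbour on $C$, then $C+v$ is an induced subdivision of the paw, a contradiction.
\item Otherwise $v$ has at least two neighbours on $C$. Since $G$ is triangle-free, these neighbours are pairwise nonadjacent. Choose two of them, $a$ and $b$, that are consecutive among the neighbours of $v$ along $C$, and let $P$ be the $a$--$b$ arc of $C$ with no other neighbour of $v$ in its interior. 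Then $P$ has length at least $2$, and $v+P$ is an induced cycle $C'$.
\item I then look for a vertex $c\in V(C)\setminus V(P)$ adjacent to $a$. Since $G$ is triangle-free, $c$ is not adjacent to $v$. Since $C$ is induced, $c$ is adjacent to no vertex of $P$ other than $a$, unless $c$ is also the other neighbour of $b$ on $C$, that is, $C=P+c$. When $c$ has $a$ as its only neighbour on $C'$, the graph $C'+c$ is an induced subdivision of the paw, a contradiction.
\end{itemize}

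The main obstacle is the degenerate configuration $C=P+c$, in which $v$ and $c$ are both adjacent to $a$ and $b$. I would handle it in three steps.
\begin{itemize}
\item By minimality of $C$ we get $|C|=4$, so $C\cup\{v\}$ induces $K_{2,3}$.
\item Suppose $G\neq K_{2,3}$. Take a further vertex $w$ attached to this $K_{2,3}$. Triangle-freeness forces $w$ to see only vertices on one side of the bipartition. The cases are then either a pendant attachment to an induced $4$-cycle, which is a contradiction as above, or an enlargement of the complete bipartite structure.
\item Conclude that $G$ is complete bipartite, hence complete multipartite.
\end{itemize}
Extending this last step to the whole of $G$ requires a short closure argument: a maximal induced complete bipartite subgraph with both sides of size at least $2$ must be all of $G$.
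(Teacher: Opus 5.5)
Your proof is correct. The paper gives no proof of this lemma to compare against: it is imported from Chudnovsky, Penev, Scott, and Trotignon and used as a black box in the ``if'' direction of Theorem~\ref{thm:main}. What you supply is a self-contained derivation. Olariu's theorem (each component of a paw-free graph is triangle-free or complete multipartite) reduces the problem to the triangle-free case, and a shortest-cycle analysis then finishes it. Your sufficiency argument via induced $K_1\cup K_2$ is also sound.

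Two details deserve one line each when you write this up.

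\emph{Existence of $c$.} You should say why $c$ exists and lies outside $P$. Since $a$ and $b$ are nonadjacent, the complementary $a$--$b$ arc of $C$ has an interior vertex. The neighbour of $a$ on that arc is $c$.

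\emph{The closure step.} This is cleaner stated generally than by inspecting attachments to the particular $K_{2,3}$. Let $(X,Y)$ be a maximal induced complete bipartite subgraph with $|X|,|Y|\ge 2$, and let $w\notin X\cup Y$ have a neighbour in $X\cup Y$.
\begin{itemize}
\item By triangle-freeness, all neighbours of $w$ in $X\cup Y$ lie on one side, say $X$.
\item If $w$ sees all of $X$, then $(X,Y\cup\{w\})$ contradicts maximality.
\item Otherwise pick $x_1\in X\cap N(w)$, $x_2\in X\setminus N(w)$ and $y_1,y_2\in Y$. Then $x_1y_1x_2y_2$ is an induced $4$-cycle with $w$ pendant at $x_1$, which is an induced subdivision of the paw.
\end{itemize}
Connectivity then gives $G=K_{|X|,|Y|}$. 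This formulation also makes explicit the one case your $K_{2,3}$ sketch leaves implicit: $w$ adjacent to exactly two of the three vertices on the larger side. There the $4$-cycle must be chosen through the non-neighbour of $w$, so it need not be the original $C$.
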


We also need a~similar result  from~\cite{dallard2021treewidth}.
A \emph{block} of a~graph is a~maximal connected subgraph without cut-vertices.
A \emph{block-cactus graph} is a~graph every block of which is a~cycle or a~complete graph.

    \begin{lemma}[{Dallard, Milanič, and Štorgel~\cite[part of Lemma~3.2]{dallard2021treewidth}}]\label{lem:diam}
A graph $G$ is diamond-induced-topological-minor-free if and only if $G$ is a~block-cactus graph.
\end{lemma}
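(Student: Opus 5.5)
Since this characterization is cited from~\cite{dallard2021treewidth}, I only sketch how I would prove it directly. The first step is to recast the excluded structure. A subdivision of the diamond is a \emph{theta}: two vertices $a,b$ joined by three internally disjoint paths, at most one of which has length~$1$. Conversely, every such theta (as an induced subgraph) is an induced subdivision of the diamond, since if no path has length $1$ we may regard one of them as the subdivided edge $ab$. So $G$ is diamond-induced-topological-minor-free if and only if $G$ contains no induced theta of this kind.

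For the \emph{if} direction, let $G$ be a block-cactus graph and suppose $T$ is an induced theta in $G$. Then $T$ is $2$-connected, so it lies inside a single block $B$ and is an induced subgraph of $B$. If $B$ is a cycle, its only $2$-connected induced subgraph is $B$ itself, and $B$ is not a theta. If $B$ is complete, all its induced subgraphs are complete, whereas a theta always contains two nonadjacent vertices (two internal vertices on different paths, or the ends of a path of length at least $2$). Either way we reach a contradiction.

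For the \emph{only if} direction, assume some block $B$ is neither a cycle nor complete; we exhibit an induced theta in $B$. I would split into two cases.

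\emph{Case 1: $B$ is chordal.} Since $B$ is not complete, it has nonadjacent vertices $u,v$. Take a minimal $u$--$v$ separator $S$. In a chordal graph $S$ is a clique, and $|S|\ge 2$ by $2$-connectivity. Both full components, the one containing $u$ and the one containing $v$, contain a vertex adjacent to all of $S$; call these $p$ and $q$. For any $x,y\in S$, the set $\{p,x,y,q\}$ induces a diamond.

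\emph{Case 2: $B$ has an induced cycle $C$ of length at least $4$.} Since $B\neq C$ and $B$ is $2$-connected, there is a vertex outside $C$ with a neighbour on $C$.
\begin{itemize}
\item Suppose some vertex $w\notin C$ has at least two neighbours on $C$.
  \begin{itemize}
  \item If $w$ is adjacent to all of $C$, then $w$ together with three consecutive vertices of $C$ induces a diamond.
  \item If $w$ has exactly two neighbours on $C$, then $C+w$ is already an induced theta. When the two neighbours are adjacent, its path lengths are $1$, $2$ and $|C|-1$.
  \item Otherwise, choose a maximal arc of non-neighbours of $w$, bounded by neighbours $x,y$. Then $w$, this arc, and a suitably chosen further arc of $C$ yield a theta. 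I expect this local case analysis to be the most fiddly step.
  \end{itemize}
\item If every vertex outside $C$ has at most one neighbour on $C$, take a shortest ear $P$, that is, a path with distinct ends $x,y$ on $C$ and interior outside $C$. Minimality makes $P$ induced and ensures its interior has no further attachments to $C$. Hence $C\cup P$ is an induced theta, and all three of its paths have length at least $2$.
\end{itemize}
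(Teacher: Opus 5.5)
The paper gives no proof of this lemma; it quotes it from Dallard, Milani\v{c}, and \v{S}torgel. Your argument is therefore a self-contained alternative rather than a reconstruction, and its skeleton is sound. The reformulation as induced thetas with at most one path of length $1$ is right. The ``if'' direction correctly uses that a theta is $2$-connected and so lies in a single (induced) block. The chordal case is correct, given Dirac's lemma and the standard fact that each full component of a minimal separator in a chordal graph contains a vertex complete to the separator; you should cite that fact explicitly. The shortest-ear argument also produces an induced theta. Compared with the citation, your route makes the note self-contained using only elementary tools, at the cost of a case analysis in Case~2.

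The one step that is not yet a proof is the subcase where $w$ has at least three, but not all, neighbours on $C$. It is true, but ``this arc plus a further arc'' does not cover every configuration. Here is one way to finish it.

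Let $x,a_1,\dots,a_m,y$ with $m\ge 1$ be a maximal arc of $C$ whose interior vertices $a_i$ are non-neighbours of $w$ and whose ends $x\neq y$ are neighbours. Let $Q$ be the other $x$--$y$ arc of $C$. Since $w$ has a third neighbour on $C$, some interior vertex of $Q$ is adjacent to $w$. Let $z$ be the one closest to $x$ along $Q$; note that $x$ and $y$ are non-adjacent.

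\emph{Case $zy\notin E(C)$.} The arc of $C$ from $z$ through $x$ to $y$ misses a vertex of $C$, so it is an induced path, and $w$ is adjacent to exactly $z$, $x$, $y$ on it. Together with $w$ it induces a theta with branch vertices $w,x$ and paths of lengths $1$, at least $2$, and $m+2\ge 3$.

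\emph{Case $zy\in E(C)$.} Then $z$ is the only neighbour of $w$ in the interior of $Q$. The same construction with $x$ and $y$ swapped works, unless also $zx\in E(C)$. In that event $\{w,x,y,z\}$ induces a diamond.

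Two small corrections to the shortest-ear bullet. First, the claim that all three paths have length at least $2$ (and that $P$ is an induced path) fails when the ends $x,y$ of $P$ are consecutive on $C$. This is harmless, since the theta then has exactly one path of length $1$, which your reformulation allows, but the sentence should say so. Second, the existence of an ear should be justified: it follows from $2$-connectivity, for example by applying the fan lemma to a vertex outside $C$.
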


We also need the following result from~\cite{BoliacLozin2002} relating clique-width of a graph to clique-width its blocks.

\begin{lemma}[Boliac and Lozin~\cite{BoliacLozin2002}]\label{lem:boliac-lozin}
For every graph $G$,
$$\cw(G) \leq \max\{\cw(B)\colon  B \text{ is a block of } G\}+2.$$
Consequently, for every graph class $\mathcal{C}$, if the blocks of all graphs in $\mathcal{C}$ have bounded clique-width, then $\mathcal{C}$ has bounded clique-width.
\end{lemma}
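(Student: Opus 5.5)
The plan is to build a clique-width expression for $G$ by working bottom-up along the block--cut-vertex tree of $G$. We use $k+2$ labels, where $k=\max\{\cw(B)\}$ over the blocks $B$ of $G$. Two labels are reserved: a \emph{dead} label $d$ and a \emph{special} label $s$. Since $\cw$ of a disjoint union is the maximum over the components, we may assume $G$ is connected. We fix a root block $R$ and regard the block--cut tree as rooted at $R$. For a cut vertex $c$ with parent block $B'$, let $G_c$ be the subgraph of $G$ induced by $c$ together with all vertices lying in blocks below $c$. For a block $B$ with parent cut vertex $p$, let $G_B$ be the subgraph induced by all vertices of $B$ and of blocks below $B$.

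The inductive invariant is the following. For every cut vertex $c$, there is a $(k+2)$-expression for $G_c$ in which, at the end, $c$ carries label $s$ and every other vertex carries label $d$. Observe that a vertex with label $d$ is never touched by any edge-creating operation, which is harmless because all edges of $G$ lie inside blocks.

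To establish the invariant, take a block $B$ with parent cut vertex $p$ and child cut vertices $c_1,\dots,c_m$, and fix a $k$-expression $\tau$ for $B$ over labels disjoint from $\{s,d\}$.
\begin{enumerate}
\item For each $i$, replace the leaf of $\tau$ creating $c_i$ with label $\ell$ by the expression for $G_{c_i}$ followed by the relabeling $\rho_{s\to\ell}$. The $d$-labelled vertices are inert, and the label $s$ is released immediately after this relabeling, before any disjoint union with other parts of $\tau$.
\item Treat $p$ specially: it is created with label $s$ instead of its label in $\tau$. Along $\tau$ we track symbolically the label $\lambda$ that $p$ would have had in $\tau$. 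Whenever $\tau$ applies $\eta_{\lambda,b}$, we additionally apply $\eta_{s,b}$. Relabelings $\rho_{\lambda\to c}$ simply update $\lambda$ and leave $s$ untouched.
\item At the end, relabel every label of $\tau$ to $d$.
\end{enumerate}
This yields an expression for the union of $B$ and the graphs $G_{c_i}$, which is $G_p$ restricted to the branch through $B$.

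If $p$ has several child blocks, we combine their expressions by disjoint union. In each of them $p$ has label $s$, and in all of them $p$ must be the same vertex. To handle this, we build the blocks below $p$ sequentially inside a single expression. That is, we use the expression for the first block as the ``creation of $p$'' in the second block's construction, and so on. Concretely, step 2 is modified so that the leaf creating $p$ is replaced by the already-built expression for $p$ together with the earlier child blocks. This expression has $p$ labelled $s$ and everything else labelled $d$, so it fits seamlessly. For the root block $R$ the same construction applies, and we simply end with everything relabelled to $d$.

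The main obstacle is the bookkeeping of step 2 together with the reuse of $s$ in step 1. We must ensure that the label $s$ is never simultaneously needed for $p$ and for some $c_i$ within the same subexpression. When $c_i$'s subexpression is built, it is a separate branch of the expression tree. It is relabelled $s\to\ell$ before it is merged by a disjoint union with the branch containing $p$. Hence the two uses of $s$ never coexist, and the total count stays at $k+2$ labels. This gives $\cw(G)\le k+2$.
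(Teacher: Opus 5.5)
The paper does not prove this lemma; it imports it from Boliac and Lozin. There is therefore no in-paper argument to match, and your construction is a correct, self-contained proof of the stated bound $k+2$. The three points that make it work are all present in what you wrote:
\begin{enumerate}
\item The label $d$ is inert, because every operation applied at a node of a block expression $\tau$ (including the extra $\eta_{s,b}$) involves only labels in $[k]\cup\{s\}$.
\item On the path from the leaf of $p$ to the root of $\tau$, the vertex $p$ is the unique vertex labelled $s$. Each substituted expression for $G_{c_i}$ has $s$ renamed to $\ell$ before it meets anything else. The expression substituted at the leaf of $p$ (built from the earlier child blocks of $p$) ends with only $p$ labelled $s$.
\item Chaining the child blocks of $p$ sequentially is exactly what is needed, since a plain disjoint union cannot identify the copies of $p$.
\end{enumerate}
To make the argument fully formal, state the invariant as an induction on the height of the cut vertex $c$ in the rooted block--cut tree. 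Also specify that the extra operations $\eta_{s,b}$ are inserted only at nodes on the path from the leaf of $p$, where $\lambda$ is defined; elsewhere they would be vacuous anyway. The ``consequently'' part is then immediate. Compared with the paper's bare citation, your route costs some length but makes the note self-contained. It also shows transparently where the additive $2$ comes from: one label reserved for finished, inert material, and one for the cut vertex through which the current piece attaches to its parent block.
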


And finally we need the following:

\begin{lemma}[{Munaro~\cite[part of Theorem 4]{munaro2017line}}]\label{lem:line-subcubic}
A graph $G$ is $\{K_4$, diamond, claw$\}$-free if and only if $G$ is the  line graph of a subcubic $K_3$-free graph.
\end{lemma}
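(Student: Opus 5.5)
We prove the two directions separately. The "if" direction is a direct check of the three forbidden graphs. For the "only if" direction we build a Krausz-type edge-disjoint clique cover of $G$ and take $H$ to be its intersection graph; the main care needed is that this cover is well defined, which is where diamond-freeness is used.

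\textbf{The "if" direction.} Let $G=L(H)$ with $H$ subcubic and $K_3$-free.
\begin{enumerate}
\item Every line graph is claw-free: the three leaves of a claw would be three pairwise disjoint edges of $H$, each meeting the centre edge. The centre edge has only two endpoints, so two of the leaves share an endpoint, a contradiction.
\item A set of pairwise intersecting edges of a simple graph either forms a triangle or shares a common endpoint.
\begin{itemize}
\item A $K_4$ in $G$ is four pairwise intersecting edges. A triangle has only three edges, so they share a common endpoint, which then has degree at least $4$. This is impossible in a subcubic $H$.
\item Now suppose $G$ contains a diamond, with edges $e_1,e_2$ common to both triangles $\{e_1,e_2,e_3\}$ and $\{e_1,e_2,e_4\}$, and $e_3e_4 \notin E(G)$. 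Since $H$ is $K_3$-free, each of these triangles of $G$ is a star in $H$.
\item Because $H$ has no parallel edges, $e_1$ and $e_2$ share exactly one vertex $w$. So both stars are centred at $w$, and $e_3,e_4$ both contain $w$.
\item Then $e_3$ and $e_4$ are adjacent in $G$, contradicting the choice of diamond.
\end{itemize}
\end{enumerate}

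\textbf{The "only if" direction: local structure.} Let $G$ be $\{K_4,\text{diamond},\text{claw}\}$-free, and fix a vertex $v$.
\begin{itemize}
\item Since $G$ is diamond-free, $G[N(v)]$ is $P_3$-free, i.e.\ a disjoint union of cliques.
\item Since $G$ is $K_4$-free, each of these cliques has size at most $2$.
\item Since $G$ is claw-free, there are at most two of them.
\end{itemize}
Consequently every edge $uv$ of $G$ lies in exactly one maximal clique. Indeed, a maximal clique containing $uv$ is $\{u,v\}$ together with the component of $N(v)$ containing $u$, minus $u$. Moreover, every maximal clique has size at most $3$, and every vertex lies in at most two maximal cliques.

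\textbf{The construction of $H$.} Let $\mathcal Q$ be the family of maximal cliques of size at least $2$. For each vertex lying in fewer than two members of $\mathcal Q$, add singleton cliques $\{v\}$ (as distinct new objects) so that every vertex lies in exactly two members. Let $H$ have vertex set this extended family, and for each $v\in V(G)$ add an edge $e_v$ joining the two cliques containing $v$.
\begin{enumerate}
\item $H$ is simple. If $u\neq v$ lay in the same two cliques, then $uv$ would lie in two distinct maximal cliques, contradicting uniqueness.
\item $L(H)=G$. The edges $e_u$ and $e_v$ meet exactly when $u$ and $v$ share a clique, which happens exactly when $uv\in E(G)$.
\item $H$ is subcubic. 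The degree of a vertex $Q$ of $H$ equals $|Q|\le 3$.
\item $H$ is $K_3$-free. A triangle $e_x e_y e_z$ in $H$ uses three distinct cliques, one for each pair of $x,y,z$. Then $x,y,z$ are pairwise adjacent in $G$, so they lie in a common maximal clique. Hence the edge $xy$ lies both in that clique and in the distinct clique of the triangle corresponding to the pair $x,y$, contradicting uniqueness again.
\end{enumerate}
This completes the proof of the lemma.
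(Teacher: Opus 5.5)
Your proof is correct. It differs from the paper only in that the paper gives no proof: it imports the lemma from Munaro (part of Theorem~4 there).

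Your ``only if'' direction is in effect a direct proof of the classical fact that the claw- and diamond-free graphs are exactly the line graphs of triangle-free graphs. Diamond-freeness makes each edge lie in a unique maximal clique, and claw-freeness puts each vertex in at most two of them. You add $K_4$-freeness to cap clique sizes, and hence degrees in $H$, at $3$. The citation keeps the note short, whereas your argument makes it self-contained and shows which forbidden graph controls which property of $H$.

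Note also that the paper only uses the easy ``if'' direction. To put the line graphs of walls into the class of $K_4$-induced-topological-minor-free graphs, it needs only that $L(W)$ is $\{K_4$, diamond, claw$\}$-free for the subcubic, bipartite walls $W$. That is exactly your short first part.

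Two cosmetic fixes:
\begin{enumerate}
\item In the triangle-freeness step, write $Q_{xy}$ for the common endpoint of $e_x$ and $e_y$, and similarly $Q_{yz}$, $Q_{xz}$. Your phrase ``the distinct clique'' for the pair $x,y$ presumes that the maximal clique $K\supseteq\{x,y,z\}$ differs from $Q_{xy}$. Instead, say that uniqueness of the maximal clique through each edge forces $Q_{xy}=Q_{yz}=Q_{xz}=K$, contradicting that the three vertices of the triangle are distinct.
\item In the simplicity and $L(H)=G$ checks, state explicitly that an added singleton object contains only one vertex. Hence any member of the extended family shared by two distinct vertices lies in $\mathcal Q$.
\end{enumerate}
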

 
We are now ready to prove our main result:

\thmmain*

\begin{proof}
Let $\mathcal X$ be the class of $H$-induced-topological-minor-free graphs.
We first prove the ``only if'' implication. Suppose that $\mathcal X$ has bounded clique-width. Observe that every induced
subdivision of $H$ yields $H$ as an induced minor after contracting the subdivided
edges, so the class of $H$-induced-minor-free graphs is contained in $\mathcal X$.
Hence, by Lemma~\ref{lem:gem-fathouse}, $H$ is an induced subgraph of the gem or of the fat-house. Moreover, $\Delta(H)\le 3$; otherwise no subcubic graph could contain $H$ as a~topological minor, implying that all walls would belong to $\mathcal X$, contradicting Observation~\ref{obs:cw}.
Since both the gem and the fat-house have a vertex of degree~$4$, it follows that $H$ is a proper induced subgraph of the gem or of the fat-house.
If $H$ is a proper induced subgraph of the gem, then a direct inspection shows
that $H$ is an induced subgraph of the $P_4$, the paw, or the diamond.
It remains to discuss the case where $H$ is a proper induced subgraph of the fat-house. 
If $H$ does not contain a clique of size $4$, then by inspecting the fat-house we infer that $H$ is an induced subgraph of the paw or of the diamond. 
Hence, we may assume that $H$ contains a clique of size $4$.
In this case $H$ is isomorphic to the $K_4$.
However, it can be observed that the class $\mathcal{X}$ of $K_4$-induced-topological-minor-free graphs contains the line graphs of walls. Indeed, since every subdivision of the $K_4$ that is not the $K_4$ itself contains either an induced diamond or an induced claw, the class $\mathcal{X}$ contains the class of $\{K_4$, diamond, claw$\}$-free graphs, which, by Lemma~\ref{lem:line-subcubic} coincides with the class of line graphs of subcubic $K_3$-free graphs and, hence, contains the line graphs of walls.
By Observation~\ref{obs:cw}, the class of line graphs of walls has unbounded clique-width, which contradicts the clique-width boundedness of $\mathcal{X}$. 
This proves the only-if direction.

To prove the ``if'' implication, we need to show that if $H$ is an induced subgraph of the $P_4$, the paw, or the diamond,
then $\mathcal{X}$ has bounded clique-width. We use the following simple monotonicity observation: if $H$ is an
induced subgraph of a~graph $F$, then every graph containing $F$ as an induced topological minor also contains $H$ as an induced topological minor. 
Hence, the class of $H$-induced-topological-minor-free graphs is contained in the class of $F$-induced-topological-minor-free graphs.

If $H$ is an induced subgraph of $P_4$, then every graph in $\mathcal{X}$ is $P_4$-free.
Hence $\mathcal{X}$ has bounded clique-width by Lemma~\ref{lem:p4}.

If $H$ is an induced subgraph of the paw, then $\mathcal{X}$ is contained in the class of paw-induced-topological-minor-free graphs. 
By Lemma~\ref{lem:paw}, each block of a~graph in $\mathcal{X}$ is a~cycle or a~complete multipartite graph. 
Now the class of cycles has bounded treewidth and, hence, bounded clique-width, while the class of complete multipartite graphs is contained in the class of $P_4$-free graphs, which has again bounded clique-width by Lemma~\ref{lem:p4}.
Consequently, by Lemma~\ref{lem:boliac-lozin}, $\mathcal{X}$ has bounded clique-width.

Finally, if $H$ is an induced subgraph of the diamond, then $\mathcal{X}$ is
contained in the class of diamond-induced-topological-minor-free graphs. 
By Lemma~\ref{lem:diam}, these graphs are block-cactus graphs. 
Thus every block is a cycle or a~complete graph. 
Since the class of cycles has bounded clique-width and the class of complete graphs is contained in the class of $P_4$-free graphs, which has bounded clique-width by Lemma~\ref{lem:p4}, we infer that every block of any graph in the class of diamond-induced-topological-minor-free graphs has bounded clique-width. 
Now it follows by Lemma~\ref{lem:boliac-lozin} that the class of diamond-induced-topological-minor-free graphs has bounded clique-width.
Hence, again $\mathcal X$ has bounded clique-width. 
This completes the proof of Theorem~\ref{thm:main}.
\end{proof}

\bibliographystyle{plainurl}
\bibliography{ref}

\end{document}